\newtheorem*{ex*}{Example(s)}
\newtheorem*{df*}{Definition}
\newtheorem*{thm*}{Theorem}
\newtheorem*{prop*}{Proposition}
\newtheorem{thm}{Theorem}
\newtheorem{lemma}{Lemma}[section]
\newtheorem{prop}[lemma]{Proposition}
\newtheorem{rmk}[lemma]{Remark}
\begin{document}

\title{Positive temperature versions of two theorems on first-passage percolation}

\author{Sasha Sodin\footnote{Department of Mathematics, Princeton University, Princeton,
NJ 08544, USA. E-mail: asodin@princeton.edu.
Supported in part by NSF grant PHY-1305472.}}

\maketitle 

\begin{abstract}
The estimates on the fluctuations of first-passsage percolation due to Ta\-la\-grand (a tail bound)
and Ben\-ja\-mini--Kalai--Schramm (a sublinear
variance bound) are transcribed into the posi\-tive-temperature setting
of random Schr\"odinger operators.
\end{abstract}

\section{Introduction}

Let $H = - \frac{1}{2d} \Delta + V$ be a random Schr\"odinger operator on $\mathbb{Z}^d$
with non-negative potential $V \geq 0$:
\[ (H\psi)(x) = (1 + V(x))\psi(x) - \frac{1}{2d} \sum_{y \sim x} \psi(y)~, \quad \psi \in \ell^2(\mathbb{Z}^d)~.\]
Assume that the entries of $V$ are independent, identically distributed, and satisfy 
\begin{equation}\label{eq:not0}
\mathbb{P}\{V(x) > 0\} > 0~. 
\end{equation}
The inverse $G = H^{-1}$ of $H$ defines a random metric 
\begin{equation}\label{eq:def.rho}
\rho(x, y) = \log \frac{\sqrt{G(x,x)G(y,y)}}{G(x,y)}
\end{equation}
on $\mathbb{Z}^d$ (see Lemma~\ref{l:bound.infty} below for the verification of the triangle inequality).
We are interested in the  behaviour of $\rho(x, y)$ for large $\| x - y \|$ (here and forth
$\| \cdot \|$ stands for the $\ell_1$ norm); to simplify the notation, set $\rho(x) = \rho(0, x)$.

Zerner proved \cite[Theorem~A]{Z}, using Kingman's subadditive ergodic theorem \cite{Ki}, that if $V$ satisfies (\ref{eq:not0})
and
\begin{equation}\label{eq:z.moment}
\mathbb{E} \log^d (1 + V(x)) < \infty~.
\end{equation}
then
\begin{equation}\label{eq:king}
\rho(x) = \|x \|_V (1+ o(1))~, \quad \|x \| \to \infty~, 
\end{equation}
where $\| \cdot \|_V$ is a deterministic norm on $\mathbb{R}^d$ determined by the distribution of $V$. 
As to the fluctuations of $\rho(x)$, Zerner showed \cite[Theorem~C]{Z} that (\ref{eq:not0}),
(\ref{eq:z.moment}), and
\[ \text{if $d = 2$, then $\mathbb{P} \left\{ V(x) = 0 \right\} = 0$} \]
imply the bound
\begin{equation}\label{eq:z}
\mathrm{Var} \, \rho(x) \leq C_V \|x\|~.
\end{equation}
In dimension $d=1$, the bound (\ref{eq:z}) is sharp; moreover, $\rho$ obeys a central limit theorem
\[ \frac{\rho(x) - \mathbb{E} \rho(x)}{\sigma_V |x|^{1/2}} \underset{|x| \to \infty}{\overset{D}{\longrightarrow}} N(0, 1)~,\]
which follows from the results of Furstenberg and Kesten \cite{FK}. In higher dimension, the fluctuations of $\rho$
are expected to be smaller: the fluctuation exponent
\[ \chi = \limsup_{\|x\| \to \infty} \frac{\frac{1}{2} \log \mathrm{Var} \, \rho(x)}{\log \|x\|} \]
is expected to be equal to $1/3$ in dimension $d=2$, and to be even smaller for $d \geq 3$; see
Krug and Spohn \cite{KSp}. 

These conjectures are closely related to the corresponding conjectures for first-passage percolation. In fact,
$\rho$ is a positive-temperature counterpart of the (site) first-passage percolation metric corresponding to
$\omega = \log(1+V)$; we refer to Zerner \cite[Section~3]{Z} for a more elaborate discussion of this connection. 

The rigorous understanding of random metric fluctuations in dimension $d \geq 2$ is for now confined to a handful of (two-dimensional) integrable models, where $\chi = 1/3$
(see Corwin \cite{C} for a review), and to several weak-disorder models in dimension $d \geq 4$, for which $\chi = 0$ (see Imbrie and Spencer \cite{IS}, and Bolthausen \cite{Bolt}). 

It is a major problem to find the value of the exponent
$\chi$ beyond these two classes of models. 
We refer to the works of  Chatterjee \cite{Ch} and Auffinger--Damron \cite{AD1,AD} for some recent results (in arbitrary dimension) establishing a connection between the fluctuation exponent $\chi$ and
the wandering exponent $\xi$
describing transversal fluctuations of the geodesics.

\vspace{2mm}\noindent
Here we carry out a much more modest task: verifying that the upper bounds on the fluctuations in (bond)  first-passage
percolation due to Talagrand \cite{T} and Benjamini--Kalai--Schramm \cite{BKS} are also valid for the random metric
(\ref{eq:def.rho}). Zerner's bound (\ref{eq:z}) is a positive-temperature counterpart of Kesten's estimate \cite{K}. 
Kesten showed that the (bond) first-passage percolation $\rho_\text{FPP}$ satisfies
\begin{equation} \mathrm{Var} \, \rho_\text{FPP}(x) \leq C \| x\|~; 
\end{equation}
furthermore, if the underlying random variables have exponential tails, then so does 
$(\rho_\text{FPP}(x) - \mathbb{E}\rho_\text{FPP}(x)) / \sqrt{\|x\|}$.
Talagrand improved the tail bound to 
\[ \mathbb{P} \left\{ | \rho_\text{FPP}(x) - \mathbb{E} \rho_\text{FPP}(x)| \geq t \right\} \leq C \exp \left\{ - \frac{t^2}{C\|x\|} \right\}~,
    \quad 0 \leq t \leq \|x\|~.\]
Benjamini, Kalai, and Schramm \cite{BKS} proved, in dimension $d \geq 2$, the sublinear bound 
\begin{equation}\label{eq:subl} \mathrm{Var} \, \rho_\text{FPP}(x) \leq C \|x\| / \log (\|x\|+2)~, 
\end{equation}
for the special case of Bernoulli-distributed potential. Bena\"im and  Rossignol \cite{BR} extended this bound to 
a wider class of distributions (``nearly gamma'' in the terminology of \cite{BR}), and complemented it 
with an exponential tail estimate. Damron, Hanson, and Sosoe \cite{DHS}
proved (\ref{eq:subl}) for arbitrary potential with $2+\log$ moments.
Extensions of the Benjamini--Kalai--Schramm bound to other models have been found by van der Berg and Kiss \cite{vdBK}, Matic and Nolen \cite{MN},
and Alexander and Zygouras \cite{AZ}.

\vspace{2mm}\noindent
Theorem~\ref{thm:fluct} below is a positive temperature analogue of Talagrand's bound
(in order to use a more elementary concentration inequality from \cite{T0,T} instead of a more
involved one from \cite{T}, we establish a slightly stronger conclusion
under a slightly stronger assumption). Theorem~\ref{thm:bks} is a positive temperature analogue of the Benjamini--Kalai--Schramm bound.

 The strategy of the proof 
is very close to the original arguments; the modification mainly enters in a couple of deterministic estimates. Compared to the closely related work of Piza \cite{P} on directed polymenrs, we economise on the use of the random walk representation (\ref{eq:rw}), with the hope that the savings will eventually suffice to address an extension discussed in Section~\ref{s:rem}.

Set $\mu(x) = \mathbb{E} \rho(x)$.

\begin{thm}\label{thm:fluct}
Suppose the entries of $V$ are independent, identically distri\-bu\-ted,  bounded from below by $\epsilon > 0$, and from above by $0 < M < \infty$. Then
\begin{equation}\label{eq:ltail}
 \mathbb{P} \left\{ \rho(x) \leq \mu(x) - t \right\}
    \leq C \exp \left\{ - \frac{t^2}{C(\epsilon,M) (\mu(x) + 1)} \right\}~, 
\end{equation}
 and
\begin{equation}\label{eq:rtail}
\mathbb{P} \left\{ \rho(x) \geq \mu(x) + t \right\}
    \leq C \exp \left\{ - \frac{t^2}{C(\epsilon,M) (\mu(x) + t + 1)} \right\}~, 
\end{equation}
for every $t \geq 0$.
\end{thm}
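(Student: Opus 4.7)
The plan is to transcribe Talagrand's FPP argument into the present setting via the random-walk expansion
\[ G(a,b) = \sum_{\omega: a \to b}(2d)^{-|\omega|}\prod_{i=0}^{|\omega|}(1+V(\omega_i))^{-1}, \]
coming from $H = (1+V) - P$ with $P$ the simple-random-walk transition operator. The hypothesis $\epsilon \le V \le M$ yields the a priori bounds $(1+M)^{-1} \le G(y,y) \le \epsilon^{-1}$ and, by counting walks of a given length, the off-diagonal decay $G(0,v) \le \epsilon^{-1}(1+\epsilon)^{-\|v\|}$, hence the deterministic lower bound $\rho(0,v) \ge c(\epsilon)\|v\| - C(\epsilon,M)$. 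Since $\log\sqrt{G(0,0)G(x,x)}$ is then $O_{\epsilon,M}(1)$, it suffices to concentrate $-\log G(0,x)$.

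The deterministic heart of the argument is a pointwise $\ell^{2}$ estimate on the Lipschitz coefficients of $\rho$. Using $\partial G(a,b)/\partial V(v) = -G(a,v)G(v,b)$ together with the identity $G(a,b) = \sqrt{G(a,a)G(b,b)}\,e^{-\rho(a,b)}$, one computes
\[ \frac{\partial\rho(x)}{\partial V(v)} = G(v,v)\left[e^{-D(v)} - \tfrac{1}{2}e^{-2\rho(0,v)} - \tfrac{1}{2}e^{-2\rho(v,x)}\right], \qquad D(v) := \rho(0,v)+\rho(v,x)-\rho(0,x). \]
The triangle inequality (Lemma~\ref{l:bound.infty}) makes $D(v) \ge 0$; AM--GM together with the metric constraints on the triple $(\rho(0,v),\rho(v,x),\rho(0,x))$ shows the bracket lies in $[0, e^{-D(v)}(1-e^{-\rho(0,x)})]$, so $0 \le \partial\rho/\partial V(v) \le \epsilon^{-1}e^{-D(v)}$. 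Combining this with the deterministic bound $D(v) \ge c(\|v\|+\|v-x\|-\|x\|) - C_1$ from the first paragraph and a lattice-point count in the Steiner-ellipsoidal region $\{v : \|v\|+\|v-x\|-\|x\| \le r\}$ yields
\[ \sum_v \left(\frac{\partial\rho(x)}{\partial V(v)}\right)^2 \;\le\; C(\epsilon,M)(\|x\|+1) \;\le\; C'(\epsilon,M)(\rho(x)+1), \]
the last inequality from $\|x\| \le c^{-1}(\rho(x)+C_1)$.

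The self-bounding inequality $\sum_v (\partial\rho/\partial V(v))^2 \le C'(\epsilon,M)(\rho(x)+1)$ is exactly the hypothesis of the concentration inequality of Talagrand \cite{T0} (sharpened in \cite{T}) for functions of independent bounded variables, and plugging it in produces
\[ \mathbb{P}(\rho(x) \le \mu(x) - t) \le C e^{-t^{2}/C(\mu(x)+1)}, \qquad \mathbb{P}(\rho(x) \ge \mu(x) + t) \le C e^{-t^{2}/C(\mu(x)+t+1)}, \]
the asymmetric denominator $\mu(x)+t+1$ arising because the self-bound is evaluated at the conditional level $\rho(x) \approx \mu(x)+t$ for the upper tail. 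The principal obstacle, relative to FPP, is the $\ell^{2}$ Lipschitz estimate itself: whereas in FPP the Lipschitz constants are supported on a single geodesic, the positive-temperature Lipschitz constants are spread over all vertices with weights $e^{-D(v)}$, and recovering the correct linear-in-$\|x\|$ scaling requires combining the triangle inequality for $\rho$ with the deterministic exponential decay $\rho \gtrsim \|\cdot\|$ to perform the Steiner-type lattice-point count.
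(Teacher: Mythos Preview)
The $\ell^{2}$ Lipschitz estimate is where your argument breaks. The claimed bound $D(v)\ge c(\|v\|+\|v-x\|-\|x\|)-C_{1}$ does not follow from the first paragraph: you have the lower bounds $\rho(0,v)\ge c\|v\|-C$ and $\rho(v,x)\ge c\|v-x\|-C$, but to subtract $\rho(0,x)$ you need an \emph{upper} bound $\rho(0,x)\le c\|x\|+C'$ with the \emph{same} constant $c$. In general the lower constant depends on $\epsilon$ while the upper one depends on $M$, so what you actually get is $D(v)\ge c(\|v\|+\|v-x\|)-C_{M}\|x\|-2C_{1}$, which says nothing on a region of volume $\sim\|x\|^{d}$. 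Even if the bound held, the Steiner count itself fails for the $\ell^{1}$ norm used here: for $x=(n,n,0,\dots,0)$ the degenerate ellipsoid $\{v:\|v\|_{1}+\|v-x\|_{1}=\|x\|_{1}\}$ is the whole box $[0,n]^{2}\times\{0\}^{d-2}$ with $\sim\|x\|_{1}^{2}$ points, each contributing $e^{-2D(v)}$ of order one, so $\sum_{v}e^{-2D(v)}\gtrsim\|x\|^{2}$ rather than $\|x\|$.

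The paper avoids all of this by noting that the $\ell^{1}$ sum of the influences is exactly $G^{2}(0,x)/G(0,x)$, and proving $G^{2}(0,x)/G(0,x)\le C_{\epsilon}(\rho(x)+1)$ via a discrete maximum-principle comparison of $u=G^{2}(0,\cdot)/G(0,\cdot)$ with $\widetilde{\rho}=\log(G(0,0)/G(0,\cdot))$ (Lemmata~\ref{l:equ}--\ref{l:convlog}); the $\ell^{2}$ bound then follows from this and the $\ell^{\infty}$ bound of Lemma~\ref{l:bound.infty}. Finally, the paper applies Talagrand's inequality for \emph{concave} Lipschitz functions, and for that it needs and proves that $\rho(x)$ is concave in $V$ (Lemma~\ref{l:conv}); your sketch omits this, and the ``self-bounding'' route you allude to does not come directly out of \cite{T0,T} without either concavity or an additional monotone/bounded-difference structure you have not verified.
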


\begin{rmk}
The assumption $\epsilon \leq V \leq M$ yields the deterministic estimate 
\begin{equation}\label{eq:determ} C_\epsilon^{-1} \|x\| \leq \rho(x) \leq C_M \|x\|~, \end{equation}
which, in conjunction with (\ref{eq:ltail}) and (\ref{eq:rtail}),
implies the inequality
\[ \mathbb{P} \left\{ \left| \rho(x) - \mu(x) \right| \geq t \right\} \leq C \exp \left\{ - \frac{t^2}{C(\epsilon,M)\|x\|} \right\}~.\]
\end{rmk}

\begin{thm}\label{thm:bks}
Assume that the distribution of the potential is given by
\[ \mathbb{P} \left\{ V(x) = a \right\} = \mathbb{P} \left\{ V(x) = b \right\} =  1/2 \]
for some $0 < a < b$, and that $d \geq 2$. Then
\begin{equation}\label{eq:bks}
\mathrm{Var} \, \rho(x) \leq C_{a,b} \, \frac{\|x\|}{\log (\|x\|+2)}~. 
\end{equation}
\end{thm}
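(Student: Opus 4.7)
The plan is to transfer the Benjamini--Kalai--Schramm (BKS) argument into the Schr\"odinger setting. Write $V(y) = a + (b-a)X(y)$ with $X(y) \in \{0,1\}$ i.i.d.\ Bernoulli, so that $\rho(x) = f_x(X)$ is a function on $\{0,1\}^{\mathbb{Z}^d}$; a preliminary truncation of $V$ outside a box of side $R$ polynomial in $\|x\|$, justified by the Combes--Thomas exponential decay of $G$ that also underlies Lemma~\ref{l:bound.infty}, reduces us to a function on a finite Bernoulli cube. To this function we apply the Talagrand/BKS sharpened Efron--Stein inequality, in the schematic form
\[
\mathrm{Var}(f) \;\leq\; C\,\frac{\sum_y \mathbb{E}|\partial_y f|^2}{\log\bigl(1/\max_y \|\partial_y f\|_\infty\bigr)}\, ,
\]
where $\partial_y f = f|_{X(y)=1} - f|_{X(y)=0}$. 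The target (\ref{eq:bks}) then reduces to two estimates.

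The first is $\sum_y \mathbb{E}|\partial_y f|^2 \leq C_{a,b}\,\|x\|$, the analogue of Zerner's bound (\ref{eq:z}) in the Bernoulli--Schr\"odinger setting. It follows from the rank-one resolvent identity
\[
G'(x_1,x_2) - G(x_1,x_2) \;=\; -\frac{(b-a)\,G(x_1,y)\,G(y,x_2)}{1 + (b-a)\,G(y,y)}\, ,
\]
which gives a pointwise bound $|\partial_y f| \leq C_{a,b}\,Q(y)$ for a Green's-function-dependent weight $Q(y)$; a telescoping sum along a monotone lattice path from $0$ to $x$ bounds $\sum_y Q(y)^2$ by $C_{a,b}\|x\|$ deterministically on every realization. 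The second estimate, $\max_y \|\partial_y f\|_\infty \leq \|x\|^{-c}$ for some $c > 0$, is obtained by a translation-averaging argument: by stationarity of $V$, the influence of $y$ on $\rho(0,x)$ equals that of $y+v$ on $\rho(v,x+v)$, and, averaged over the $\asymp \|x\|^d$ shifts $v$ contained in a cylinder around $[0,x]$, it is bounded by the total expected influence $\sum_z \mathbb{E}|\partial_z f_{v,x+v}|$, itself linear in $\|x\|$ by the first estimate.

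The main obstacle is the cleanliness of this averaging step: in FPP the effect of a single edge is strictly local, whereas here $V(y)$ enters the resolvent globally, so $\partial_y f$ is genuinely a function of all coordinates. One must therefore complement the averaging by a quantitative spatial localization of $\partial_y f$. The route I would pursue is to invoke the Combes--Thomas bound $|G(x_1,x_2)| \leq C_a\,e^{-c_a\|x_1-x_2\|}$ (available because $V \geq a > 0$) to show that $Q(y)$ decays exponentially in the distance from $y$ to the segment $[0,x]$, confining the averaging to the claimed cylinder and allowing the BKS argument to proceed essentially as in the original paper.
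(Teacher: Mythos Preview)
Your overall plan follows BKS, but the second estimate has a genuine gap. The claim $\max_y \|\partial_y f\|_\infty \leq \|x\|^{-c}$ for $f=\rho(0,x)$ is false: for $y=0$ (or $y=x$), Lemma~\ref{l:deriv} gives $\partial \log g(x)/\partial V(0) = -G(0,0)$, which is bounded above and below by positive constants depending only on $a,b$; the discrete derivative $\partial_0 f$ is therefore of order one for every $\|x\|$. Your translation-averaging does not repair this. Stationarity yields $\mathbb{E}|\partial_y\rho(v,x+v)|=\mathbb{E}|\partial_{y-v}\rho(0,x)|$, and summing over $v$ in a set $S$ gives
\[
\frac{1}{|S|}\sum_{v\in S}\mathbb{E}\bigl|\partial_y\rho(v,x+v)\bigr|
\;\leq\;\frac{1}{|S|}\sum_{w}\mathbb{E}\bigl|\partial_w\rho(0,x)\bigr|\;\leq\;\frac{C\|x\|}{|S|}~,
\]
which is an $L^1$ bound on the derivative of the \emph{averaged} observable $F=\frac{1}{|S|}\sum_{v\in S}\rho(v,x+v)$, not a bound on $\|\partial_y f\|_\infty$ (or even on $\mathbb{E}|\partial_y f|$) for any fixed pair of endpoints. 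Since the schematic inequality you wrote has $\log(1/\max_y\|\partial_y f\|_\infty)$ in the denominator, it gives nothing here; the correct form of Talagrand's inequality compares $\mathbb{E}|\partial_y f|^2$ with $(\mathbb{E}|\partial_y f|)^2$ term by term, and only becomes useful after one has passed to $F$.

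The paper carries out exactly this replacement: it sets $F=-\frac{1}{\#B}\sum_{z\in B}\log G(z,x+z)$ with $B$ a ball of radius $m\asymp\|x\|^{1/4}$, checks $|F-\rho(x)|\leq C_{a,b}m$ via Lemma~\ref{l:bound.infty}, and then applies Talagrand's inequality to $F$. The step where the Schr\"odinger model genuinely differs from FPP is the bound $\mathbb{E}|\partial_y F|\leq Cm/\#B$: rather than Combes--Thomas decay, the paper proves the deterministic estimate (Lemma~\ref{l:bks})
\[
\sum_{v\in Q}\frac{G(x',v)G(v,x)}{G(x',x)}\;\leq\;C_{a,b}\,(\mathrm{diam}\,Q+1)
\]
for arbitrary $Q\subset\mathbb{Z}^d$, a localized refinement of Lemma~\ref{l:7.5}, and applies it with $Q=y+B$. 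This is the Schr\"odinger substitute for the ``an edge lies on few shifted geodesics'' step in BKS, and it is the missing ingredient in your outline.
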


We conclude the introduction with a brief comment on lower bounds. In dimension $d=2$, Newman and Piza \cite{NP} proved the logarithmic lower bound 
\begin{equation}\label{eq:np}
\operatorname{Var} \rho_\text{FPP} (x) \geq \frac{1}{C} \log (\|x\|+1)~.
\end{equation}
A version for  directed polymers (the positive temperature counterpart of directed first passage percolation)
was proved by Piza \cite{P}; the argument there is equally applicable to the undirected polymers which
are the subject of the current note. We are not aware of
any non-trivial lower bounds in dimension $d \geq 3$.

\section{Proof of Theorem~\ref{thm:fluct}}

\vspace{3mm}\noindent
The proof of Theorem~\ref{thm:fluct} is based on Talagrand's concentration inequality \cite{T0,T}. 
We state this inequality as

\begin{lemma}[Talagrand]\label{l:radek}
Assume that $\left\{ V(x) \, \mid \, x \in \mathcal{X} \right\}$ are independent random variables, the distribution of every
one of which is supported in $[0, M]$. Then, for every convex (or concave) $L$-Lipschitz function $f: \mathbb{R}^\mathcal{X} \to \mathbb{R}$.
\[  \mathbb{P} \left\{ f \geq \mathbb{E} f + t \right\} \leq  C \exp \left\{ - \frac{t^2}{C M^2 L^2} \right\}~,\]
where $C>0$ is a constant.
\end{lemma}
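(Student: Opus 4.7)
Lemma~\ref{l:radek} is a classical result of Talagrand, and in the actual paper I would simply cite \cite{T0,T} rather than reprove it. Nevertheless, let me outline the proof strategy I have in mind, should a self-contained argument be desirable.

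The heart of the matter is Talagrand's convex distance inequality. For $A \subset [0,M]^{\mathcal{X}}$ and $x \in [0,M]^{\mathcal{X}}$, define the convex distance
\[ d_T(x, A) = \sup_{\alpha \in \mathbb{R}_+^{\mathcal{X}}, \, \|\alpha\|_2 \leq 1} \, \inf_{y \in A} \sum_{i \, : \, x_i \neq y_i} \alpha_i~. \]
Talagrand's inequality, proved by induction on $|\mathcal{X}|$ via a coordinate-splitting argument, asserts that
\[ \mathbb{E}\left[ \exp\left( d_T(X,A)^2 / 4 \right) \right] \leq \frac{1}{\mathbb{P}\{X \in A\}}~. \]
This is the single nontrivial ingredient; everything else is a deterministic unwrapping.

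From this, the tail bound for a convex $L$-Lipschitz $f$ follows by a geometric observation: if $A_s = \{f \leq s\}$, then for every $x \in [0,M]^{\mathcal{X}}$ one has $f(x) - s \leq L M \, d_T(x, A_s)$. Indeed, the extremal direction $\alpha$ in the definition of $d_T(x, A_s)$, scaled by $M$, produces a convex combination of points in $A_s$ lying within Euclidean distance $M \, d_T(x, A_s)$ of $x$; convexity of $f$ together with its Lipschitz constant $L$ then yields the bound. Taking $s$ to be a median of $f$ and combining with the exponential-moment estimate via Markov's inequality gives concentration about the median at the Gaussian rate $C M^2 L^2$; a standard mean-median comparison converts this into concentration about $\mathbb{E} f$ at the cost of enlarging the universal constant $C$. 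The concave case is handled symmetrically, applying the same deterministic step to $-f$ and reusing convexity of the sublevel sets $\{-f \leq s\}$.

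There is no real obstacle in this plan, as the argument is well-established. The only points requiring care are the precise universal constant and the matching of norms (Euclidean Lipschitz versus coordinatewise), which dictates how the factor $M$ enters. In the broader context of the paper, the substantive work lies not in proving Lemma~\ref{l:radek} but in verifying that the functional $V \mapsto \rho(x)$ to which it will be applied is (up to an additive correction) convex and Lipschitz in $V$ --- a deterministic property of the Green's function that Lemma~\ref{l:radek} is being set up to feed.
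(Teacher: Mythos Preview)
Your reading is correct: in the paper this lemma is simply stated as Talagrand's result with references \cite{T0,T} and is not reproved, which is exactly what you anticipate in your first sentence. The sketch you then supply via the convex distance inequality is the standard argument and is sound; since the paper offers no proof of its own, there is nothing to compare against, and your outline would serve perfectly well if a self-contained argument were ever wanted.
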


Denote $g(x) = G(0, x)$. To apply Lemma~\ref{l:radek}, we first compute the gradient of $\log g$,
and then estimate its norm.
\begin{lemma}\label{l:deriv} For any $x, y \in \mathbb{Z}^d$,
\[ \frac{\partial}{\partial V(y)} \log g(x) = - \frac{G(0, y)G(y,x)}{G(0, x)}~. \] 
\end{lemma}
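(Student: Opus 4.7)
The plan is to recognize this as a direct application of the first-order resolvent perturbation formula. Since the operator $H$ depends on the potential in a very simple way — namely, $H = H_0 + \mathrm{diag}(V)$ with $H_0 = I - \frac{1}{2d}\Delta$ independent of $V$ — we have
\[ \frac{\partial H}{\partial V(y)} = P_y, \]
where $P_y = |y\rangle\langle y|$ is the rank-one orthogonal projection onto $e_y$. Everything else follows from this observation.

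First I would differentiate the identity $GH = I$ with respect to $V(y)$. Treating $V(y)$ as a real parameter (with all other coordinates of $V$ held fixed), and using the product rule, one gets
\[ \frac{\partial G}{\partial V(y)} \, H + G \, P_y = 0, \]
so that
\[ \frac{\partial G}{\partial V(y)} = - G \, P_y \, G. \]
Taking matrix elements between $\delta_0$ and $\delta_x$ yields
\[ \frac{\partial G(0,x)}{\partial V(y)} = - G(0,y)\, G(y,x). \]
The claimed formula is then immediate from the chain rule, since $g(x) = G(0,x)$ is strictly positive (as $H$ is a nonnegative matrix with positive diagonal, $G = H^{-1}$ has nonnegative entries, and the Combes--Thomas / Neumann expansion shows $G(0,x) > 0$ for all finite $x$), so $\log g(x)$ is differentiable and $\partial_{V(y)} \log g(x) = g(x)^{-1} \partial_{V(y)} g(x)$.

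The only point that requires a moment's care is justifying the differentiation of the infinite matrix $G$; the main obstacle, such as it is, is technical rather than conceptual. One clean way is to expand $G$ as the convergent Neumann series $G = \sum_{n \geq 0} (I - H)^n$ (or more safely, $G = (I + K)^{-1}/(1+c)$ after rescaling so that $\|I - H/(1+c)\| < 1$, using the uniform lower bound $H \geq cI$ available for bounded $V$ as in Theorem~\ref{thm:fluct}); each term is a polynomial in the $V(z)$'s, so termwise differentiation is legitimate and yields exactly $-GP_yG$ after resummation. Alternatively, one can pass through a finite-volume truncation $H_\Lambda$ on a large box $\Lambda \ni 0,x,y$, where the formula $\partial G_\Lambda / \partial V(y) = -G_\Lambda P_y G_\Lambda$ holds by elementary calculus for matrices, and then send $\Lambda \uparrow \mathbb{Z}^d$ using the monotone convergence of resolvents of nonnegative operators.
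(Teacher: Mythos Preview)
Your proof is correct and follows essentially the same route as the paper: both compute $\partial G/\partial V(y) = -GP_yG$ via first-order resolvent perturbation (the paper writes the resolvent identity $G_h = G - hGP_yG_h$ and differentiates at $h=0$, you differentiate $GH=I$ directly), then read off matrix elements and apply the chain rule. Your additional remarks on justifying differentiability via the Neumann series or finite-volume approximation go beyond what the paper spells out, but the core argument is the same.
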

\begin{proof}
Let $P_y = \delta_y \delta_y^*$ be the projector on the $y$-th coordinate. Set
$H_h = H + h P_y$,  $G_h = H_h^{-1}$. By the resolvent identity
\[ G_h = G - h G P_y G_h~,\]
hence
\[ \frac{d}{dh} \Big|_{h = 0} G_h = - G P_y G\]
and
\[ \frac{d}{dh} \Big|_{h = 0} G_h(0, x) = - G(0, y) G(y, x)~. \]
\end{proof}

\vspace{3mm}\noindent
Our next goal is to prove 
\begin{prop}\label{p:bound.2}
Suppose $V \geq \epsilon > 0$. Then 
\begin{equation}\label{eq:l2} \sum_{y} \left[ \frac{G(0, y) G(y, x)}{G(0, x)} \right]^2 \leq
  A_\epsilon (\rho(x) + 1)~, 
\end{equation}
where $A_\epsilon$ depends only on $\epsilon$.
\end{prop}

The proof consists of two ingredients. The first one, equivalent to the triangle inequality for
$\rho$, yields an upper bound on every term in the left-hand side of (\ref{eq:l2}).
\begin{lemma}\label{l:bound.infty}
For any $x,y \in \mathbb{Z}^d$,
\[ \frac{G(0, y)G(y, x)}{G(0,x)} \leq G(y, y) \leq C_\epsilon~. \]
\end{lemma}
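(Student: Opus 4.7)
The plan is to prove the two inequalities separately. The right-hand bound $G(y,y) \le C_\epsilon$ is purely operator-theoretic. Writing $H = (1+V)I - A$ with $A(x,y) = \tfrac{1}{2d}\mathbf{1}_{x\sim y}$, the hopping operator $A$ is $1/(2d)$ times the adjacency matrix of $\mathbb{Z}^d$ and therefore satisfies $\|A\| \le 1$ on $\ell^2(\mathbb{Z}^d)$. Together with the assumption $V \ge \epsilon$, this yields $H \succeq \epsilon I$ and hence $G = H^{-1} \preceq \epsilon^{-1} I$. In particular, $G(y,y) = \langle \delta_y, G \delta_y \rangle \le \epsilon^{-1}$, so one may take $C_\epsilon = 1/\epsilon$.

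The main assertion is $G(0,y) G(y,x) \le G(y,y) G(0,x)$, which after taking logarithms is precisely the triangle inequality $\rho(0,x) \le \rho(0,y) + \rho(y,x)$ for the putative metric (\ref{eq:def.rho}) that this lemma was promised to verify. I would prove it through the probabilistic interpretation of $G$. Factor $H = D(I - P)$ with $D = (1+V)I$ and
\[ P(x,y) = \frac{\mathbf{1}_{x \sim y}}{2d(1+V(x))}~. \]
Since $\|P\| \le 1/(1+\epsilon) < 1$, the Neumann series converges and gives $G(a,b) = (1+V(b))^{-1} \sum_{n \ge 0} P^n(a,b)$. The kernel $P$ is substochastic, so it is the transition matrix of the random walk on $\mathbb{Z}^d$ that at site $x$ either jumps to a uniformly chosen neighbour (total probability $1/(1+V(x))$) or is killed (probability $V(x)/(1+V(x))$). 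Decomposing the sum $\sum_n P^n(a,b)$ at the first hitting time $\tau_b$ of $b$ yields the classical identity
\[ G(a,b) = \mathbb{P}_a(\tau_b < \infty) \, G(b,b)~. \]

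Applying this identity to the three pairs $(0,y)$, $(y,x)$, $(0,x)$ and cancelling the common factor $G(y,y) G(x,x)$, the desired inequality reduces to
\[ \mathbb{P}_0(\tau_y < \infty) \, \mathbb{P}_y(\tau_x < \infty) \le \mathbb{P}_0(\tau_x < \infty)~, \]
which is immediate from the strong Markov property at $\tau_y$: the left-hand side equals the probability that the walk from $0$ first visits $y$ in finite time and then reaches $x$ from $y$ in finite time, a sub-event of $\{\tau_x < \infty\}$. No step presents a real obstacle; the substance of the lemma is recognising that the first inequality encodes the triangle inequality for $\rho$, together with the standard first-passage representation of the Green's function of a killed random walk.
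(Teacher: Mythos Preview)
Your proof is correct. Both inequalities are established soundly: the spectral bound $H\succeq\epsilon I$ gives $G(y,y)\le 1/\epsilon$, and the first-passage representation $G(a,b)=\mathbb{P}_a(\tau_b<\infty)\,G(b,b)$ for the killed walk reduces the main inequality to the trivial containment $\{\tau_y<\infty\}\cap\theta_{\tau_y}^{-1}\{\tau_x<\infty\}\subset\{\tau_x<\infty\}$ via the strong Markov property.

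The paper takes a different, purely algebraic route. It introduces the operator $H_y$ obtained from $H$ by deleting the edges incident to $y$, and uses the resolvent identity to derive the exact decomposition
\[
G(0,x)=G_y(0,x)+\frac{G(0,y)\,G(y,x)}{G(y,y)}~,
\]
from which the inequality follows since $G_y(0,x)\ge 0$. This is the analytic counterpart of your probabilistic argument: erasing the edges at $y$ is precisely what separates paths that visit $y$ from those that do not, and the term $G(0,y)G(y,x)/G(y,y)$ is the algebraic encoding of your $\mathbb{P}_0(\tau_y<\infty)\,\mathbb{P}_y(\tau_x<\infty)\,G(x,x)$. Your approach is more transparent for a probabilist and makes the connection to the triangle inequality for $\rho$ explicit; the paper's approach has the minor advantage of yielding an identity rather than just an inequality, and stays within the operator-theoretic language used throughout the rest of the argument.
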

\begin{proof}
Let $H_y$ be the operator obtained by erasing the edges that connect $y$ to its
neighbours, and let $G_y = H_y^{-1}$. By the resolvent identity,
\[ G(0, x) = G_y(0, x) + \frac{1}{2d} \sum_{y' \sim y} G_y(0, y') G(y, x)~.\]
In particular,
\[ G(0, y) = \frac{1}{2d} \sum_{y' \sim y} G_y(0, y') G(y, y)~. \]
Therefore
\[ G(0, x) = G_y(0, x) + \frac{G(0, y) G(y, x)}{G(y, y)}~.\]
\end{proof}

The second ingredient is
\begin{lemma}\label{l:7.5}
For any $x \in \mathbb{Z}^d$,
\[ \sum_y \frac{G(0,y) G(y, x)}{G(0,x)} \leq C_\epsilon(\rho(x) + 1)~. \]
\end{lemma}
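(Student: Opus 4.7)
The plan is to rewrite the sum as $G^2(0,x)/G(0,x)$ (since $\sum_y G(0,y)G(y,x) = (GG)(0,x)$) and bound this ratio by combining a heat-semigroup estimate with an elementary lower bound on $\rho(x)$ in terms of the endpoint potentials $V(0)$, $V(x)$.

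First, since $H \geq \epsilon I$, I would use the representations $G = \int_0^\infty e^{-tH}\,dt$ and $G^2 = \int_0^\infty t\,e^{-tH}\,dt$, noting that the kernel $p_t(0,x) := (e^{-tH})(0,x)$ is nonnegative (the off-diagonal part of $-H$ is entrywise nonnegative, so positivity follows from a Trotter expansion or Feynman--Kac). Splitting $\int_0^\infty t\,p_t(0,x)\,dt$ at a threshold $T$ of order $\epsilon^{-1}(-\log G(0,x))$, the head is bounded by $T\,G(0,x)$ and the tail by $\int_T^\infty t\,e^{-\epsilon t}\,dt$, using $p_t(0,x) \leq \|e^{-tH}\|_{\mathrm{op}} \leq e^{-\epsilon t}$; balancing the two contributions yields
\[ \frac{G^2(0,x)}{G(0,x)} \leq C_\epsilon\bigl(-\log G(0,x) + 1\bigr). \]

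The main obstacle is that $-\log G(0,x) = \rho(x) - \tfrac{1}{2}\log(G(0,0)G(x,x))$ may considerably exceed $\rho(x)$ when $V(0)$ or $V(x)$ is large, since the only a priori lower bound on $G(y,y)$ is $1/(1+V(y))$. To absorb the excess into $\rho(x)$, I would apply $HG(\cdot,x) = \delta_x$ at the site $0$ (for $x \neq 0$): combined with the Cauchy--Schwarz bound $G(z,x) \leq \sqrt{G(z,z)G(x,x)}$ and $G(z,z) \leq 1/\epsilon$, it gives $(1+V(0))\,G(0,x) \leq \sqrt{G(x,x)/\epsilon}$. Together with the elementary estimate $G(0,0) \geq 1/(1+V(0))$ (from $\langle \delta_0, H\delta_0\rangle\,\langle \delta_0, G\delta_0\rangle \geq 1$), this yields $\rho(x) \geq \tfrac{1}{2}\log((1+V(0))\epsilon)$ and, by symmetry, $\rho(x) \geq \tfrac{1}{2}\log((1+V(x))\epsilon)$. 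Hence $-\log G(0,x) \leq 3\rho(x) + O_\epsilon(1)$, which closes the argument for $x \neq 0$. The degenerate case $x = 0$ is handled directly by the operator inequality $G^2 \leq G/\epsilon$ (following from $G \leq I/\epsilon$), which gives $G^2(0,0)/G(0,0) \leq 1/\epsilon$ matching the right-hand side since $\rho(0) = 0$.
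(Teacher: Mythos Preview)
Your argument is correct and takes a genuinely different route from the paper's. The paper proves the lemma by a discrete maximum principle: writing $u(x)=G^2(0,x)/G(0,x)$ and $\widetilde\rho(x)=\log\frac{G(0,0)}{G(0,x)}$, it shows (Lemmata~\ref{l:equ} and~\ref{l:convlog}) that both satisfy averaging identities with the same weights $\frac{g(y)}{2d(1+V(x))g(x)}$, with $u$ driven by the source $\frac{1}{1+V(x)}$ and $\widetilde\rho$ by at least $\log(1+V(x))$; choosing $A\ge 1/\log(1+\epsilon)$ makes $u-A\widetilde\rho$ subharmonic, and the maximum is then attained at the origin where it is $O_\epsilon(1)$. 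You instead write $G^2=\int_0^\infty t\,e^{-tH}\,dt$, split at $T\sim \epsilon^{-1}(-\log G(0,x))$, bound the tail via $p_t(0,x)\le\|e^{-tH}\|\le e^{-\epsilon t}$, and obtain $u(x)\le C_\epsilon\bigl(1+(-\log G(0,x))_+\bigr)$; the final conversion $-\log G(0,x)\le 3\rho(x)+O_\epsilon(1)$ via the endpoint estimate $(1+V(0))G(0,x)\le\sqrt{G(x,x)/\epsilon}$ (from $HG(\cdot,x)=\delta_x$ at $0$) neatly handles unbounded $V$ without any comparison argument.

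Your approach is more spectral and would transfer verbatim to any positivity-preserving self-adjoint $H$ with a gap; the paper's approach is more tied to the random-walk structure but delivers the bound directly in terms of $\widetilde\rho$, so no separate endpoint analysis is needed. One small cosmetic point: since $G(0,x)$ can exceed $1$ when $\epsilon<1$, your intermediate inequality should be stated with $(-\log G(0,x))_+$ rather than $-\log G(0,x)$; this is harmless, as taking any fixed $T>0$ already gives a uniform $O_\epsilon(1)$ bound on $u(x)$ in that regime.
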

The proof of Lemma~\ref{l:7.5} requires two more lemmata. Denote
\[   g_2(x) = G^2(0, x) = \sum_y G(0, y) G(y, x)~, \quad u(x) = \frac{g_2(x)}{g(x)}~. \]

\begin{lemma}\label{l:equ} For any $x \in \mathbb{Z}^d$,
\begin{equation}\label{eq:relg}
\sum_{y \sim x} \frac{g(y)}{2d(1+V(x))g(x)} = 1 - \frac{\delta(x)}{(1+V(0))g(0)} 
\end{equation}
and
\begin{equation}\label{eq:relu}
u(x) = \sum_{y \sim x} u(y) \, \frac{g(y)}{2d(1+V(x))g(x)} + \frac{1}{1+V(x)}~. 
\end{equation}
\end{lemma}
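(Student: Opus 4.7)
The plan is to read both identities directly off the eigenvalue-type equations that $g$ and $g_2 = Gg$ satisfy by virtue of being columns of $G$ and $G^2$, respectively. No probabilistic or geometric input is needed; everything is an algebraic unfolding of $HG = I$ together with the definition of $H$.

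For \eqref{eq:relg}, I start from $HG = I$, which applied to the $0$-th column reads $Hg = \delta_0$; writing this out at a point $x$ gives
$$
(1+V(x))\,g(x) - \frac{1}{2d}\sum_{y \sim x} g(y) \;=\; \delta(x).
$$
Dividing through by $(1+V(x))\,g(x)$ and rearranging yields \eqref{eq:relg}. In the $\delta$-term, one may freely replace $(1+V(x))\,g(x)$ by $(1+V(0))\,g(0)$, since $\delta(x)$ vanishes unless $x = 0$. The division is legitimate because $H$ is a diagonally dominant $M$-matrix on the connected graph $\mathbb{Z}^d$, which forces $G$ to have strictly positive entries, so $g(x) > 0$.

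For \eqref{eq:relu}, the analogous input is $H G^2 = G$, i.e.\ $H g_2 = g$, which at the point $x$ reads
$$
(1+V(x))\,g_2(x) - \frac{1}{2d}\sum_{y \sim x} g_2(y) \;=\; g(x).
$$
Dividing by $(1+V(x))\,g(x)$ and substituting $g_2(y) = u(y)\,g(y)$ on the right-hand side produces \eqref{eq:relu} on the nose.

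I do not anticipate any real obstacle; the lemma is a two-line algebraic consequence of $HG = I$ and $HG^2 = G$. The only thing worth verifying at the outset is that $g_2$ is actually meaningful: under the standing assumption $V \geq \epsilon > 0$ carried over from Theorem~\ref{thm:fluct}, $H$ is a bounded self-adjoint operator with spectrum contained in $[\epsilon,\infty)$, so $G = H^{-1}$ is bounded on $\ell^2(\mathbb{Z}^d)$ and $g_2 = G^2\delta_0 \in \ell^2(\mathbb{Z}^d)$, making the pointwise identity $Hg_2 = g$ rigorous.
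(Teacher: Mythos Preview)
Your proof is correct and follows exactly the approach of the paper, which simply records that \eqref{eq:relg} follows from $Hg=\delta$ and \eqref{eq:relu} from $Hg_2=g$; you have merely spelled out the one-line divisions and added the (helpful) remarks on positivity of $g$ and well-definedness of $g_2$.
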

\begin{proof}
The first formula follows from the relation $Hg = \delta$, and the second one -- from the relation $H g_2 = g$.
\end{proof}

Set $\widetilde{\rho}(x) = \log \frac{G(0,0)}{G(0, x)}$. 
\begin{lemma}\label{l:convlog}
For any $x \in \mathbb{Z}^d$,
\begin{multline*}
\widetilde{\rho}(x) \geq \sum_{y \sim x} \widetilde{\rho}(y)  \, \frac{g(y)}{2d(1+V(x))g(x)} \\
+ \log(1 + V(x)) + \log\left(1 - \frac{1}{(1+V(0))g(0)}\right) \delta(x)~. 
\end{multline*}
\end{lemma}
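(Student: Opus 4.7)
The plan is to pass the key identity $(1+V(x))g(x) = \delta(x) + \frac{1}{2d}\sum_{y\sim x} g(y)$ from Lemma~\ref{l:equ} to logarithms and apply an entropy estimate. Observe that the weights $w_y := g(y)/(2d(1+V(x))g(x))$ in the statement are proportional to the Gibbs weights $\alpha_y := g(y)/\sum_{y'\sim x} g(y')$, which form a probability distribution on the $2d$ neighbors of $x$: indeed $w_y = q_x\alpha_y$ with $q_x := \sum_y w_y$ equal to $1$ if $x\neq 0$ and to $p := 1 - \frac{1}{(1+V(0))g(0)}$ if $x=0$.

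The core estimate is the Gibbs identity $\sum_y \alpha_y \log g(y) = \log\sum_y g(y) - H(\alpha)$ coupled with the bound $H(\alpha) \leq \log(2d)$. Substituting $\sum_y g(y) = 2d[(1+V(x))g(x) - \delta(x)]$ gives
\[
\sum_y w_y \log g(y) = q_x\sum_y \alpha_y \log g(y) \geq q_x \log\bigl[(1+V(x))g(x) - \delta(x)\bigr].
\]
For $x \neq 0$, this reads $\sum_y w_y \log g(y) \geq \log[(1+V(x))g(x)]$; subtracting $\log g(0)$ (and using $\sum w_y = 1$) converts it into the claim $\widetilde\rho(x) \geq \sum_y w_y \widetilde\rho(y) + \log(1+V(x))$.

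The case $x=0$ is the more delicate one and will be the main obstacle. There the entropy bound delivers only $\sum_y w_y \log g(y) \geq p\log[(1+V(0))g(0)\,p]$, whereas the target is $\sum_y w_y \log g(y) \geq p\log g(0) + \log[(1+V(0))p]$; the gap between the two is $(1-p)\log[(1+V(0))p]$, so the claim reduces to the deterministic inequality $V(0)g(0) \leq 1$. I would prove this via the maximum principle: the identity $(1+V(x))g(x) = \frac{1}{2d}\sum_{y\sim x}g(y)$ for $x \neq 0$, combined with $V(x) > 0$, shows that $g$ is strictly subharmonic away from $0$; since $H \geq \epsilon I$ forces $g\in\ell^2$ and hence $g(y) \to 0$ at infinity, the positive function $g$ attains its supremum only at $0$. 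In particular $\tfrac{1}{2d}\sum_{y\sim 0}g(y) \leq g(0)$, which together with the identity $V(0)g(0) = 1 - g(0) + \tfrac{1}{2d}\sum_{y\sim 0} g(y)$ yields $V(0)g(0) \leq 1$.
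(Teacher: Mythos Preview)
Your argument for $x\neq 0$ is essentially the paper's: the ``Gibbs identity'' $\sum_y \alpha_y \log g(y)=\log\sum_y g(y)-H(\alpha)$ together with $H(\alpha)\le\log(2d)$ is just a rephrasing of the Jensen step the paper writes as $\sum_y w_y\log(1/w_y)\le\log(2d)$, and the conversion to $\widetilde\rho$ is identical.

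For $x=0$ the paper merely says ``the argument is similar'', whereas you actually carry out the computation and notice that the entropy bound alone only gives $\sum_y w_y\log g(y)\ge p\log\bigl[(1+V(0))g(0)p\bigr]$, which is short of the target by $(1-p)\log\bigl[(1+V(0))p\bigr]$. You are right that closing this gap is equivalent to the deterministic inequality $V(0)g(0)\le 1$, and your maximum-principle proof of the latter is correct under the ambient hypothesis $V\ge\epsilon>0$ (so that $g\in\ell^2$ and hence $g(y)\to 0$). So here you are supplying an ingredient that the paper's ``similar'' elides. A slightly shorter route to $V(0)g(0)\le 1$: since $-\tfrac{1}{2d}\Delta\ge 0$ as an operator, $H\ge V$, and operator monotonicity of the inverse gives $G(0,0)=\langle\delta_0,H^{-1}\delta_0\rangle\le\langle\delta_0,V^{-1}\delta_0\rangle=1/V(0)$.
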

\begin{proof}
For $x \neq 0$, (\ref{eq:relg}) and the concavity of logarithm yield
\[ \sum_{y \sim x} \frac{g(y)}{2d(1+V(x))g(x)} \log \frac{2d(1+V(x))g(x)}{g(y)} \leq \log(2d)~. \]
Using (\ref{eq:relg}) once again, we obtain
\[ - \widetilde{\rho}(x) + \sum_{y \sim x} \widetilde{\rho}(y) \, \frac{g(y)}{2d(1+V(x))g(x)}+ \log(1 + V(x)) \leq 0~.\]
The argument is similar for $x = 0$.
\end{proof}

\begin{proof}[Proof of Lemma~\ref{l:7.5}]
Let $A \geq \log^{-1} (1+\epsilon)$. Then from Lemmata~\ref{l:equ} and \ref{l:convlog} the function $u_A = u - A \widetilde{\rho}$
satisfies
\[ u_A(x) \leq \sum_{y \sim x} u_A(y) \, \frac{g(y)}{2d(1+V(x))g(x)} - A \log\left(1 - \frac{1}{(1+V(0))g(0)}\right) \delta(x)~. \]
By a finite-volume approximation argument (which is applicable due to the deterministic bound (\ref{eq:determ})),
\[ \max u_A(x) = u_A(0) \leq - \frac{A}{1 - \frac{1}{(1+V(0))g(0)}} \log\left(1 - \frac{1}{(1+V(0))g(0)}\right) \leq A_\epsilon'~,\]
whence
\[ u(x) \leq A_\epsilon' + A \widetilde{\rho}(x) \leq C_\epsilon(1 + \rho(x))~.\]
\end{proof}

\begin{proof}[Proof of Proposition~\ref{p:bound.2}]
By Lemma~\ref{l:bound.infty}~,
\[\begin{split}
 L &= \sum_{y} \left[ \frac{G(0, y) G(y, x)}{G(0, x)} \right]^2 \\
   &\leq \max_y G(y, y) \,  \sum_{y} \frac{G(0, y) G(y, x)}{G(0, x)} = \max_y G(y, y) \, u(x)~.
\end{split}\]
The inequality $V \geq \epsilon$ implies $G(y, y) \leq A_\epsilon''$, and Lemma~\ref{l:7.5} implies
\[ u(x) \leq C_\epsilon (\rho(x) + 1)~. \]
\end{proof}

\vspace{3mm}\noindent
Next, we need
\begin{lemma}\label{l:conv}
For any $x \in \mathbb{Z}^d$, $\log g(x)$, $\log \frac{G(0,x)}{G(0,0)}$, and $\log \frac{G(0, x)}{G(x,x)}$
are convex functions of the potential. Consequently, 
\[ \rho(x) = - \frac{1}{2} \left[ \log \frac{G(0,x)}{G(0,0)} + \log \frac{G(0, x)}{G(x,x)} \right] \]
is a concave function of the potential.
\end{lemma}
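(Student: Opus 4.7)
The plan is to exhibit each of the three log-quantities as the logarithm of a sum of manifestly log-convex functions of $V$, and then invoke the (H\"older-type) fact that a sum of log-convex functions is log-convex. The starting point is the walk (Neumann) expansion of the Green's function: writing $H = D - (2d)^{-1} A$ with $D = I + V_{\mathrm{diag}}$ and $A$ the adjacency operator of $\mathbb{Z}^d$, and using $\|D^{-1}A/(2d)\| \leq 1/(1+\epsilon) < 1$, one has
\[
G(a,b) = \sum_{\gamma : a \to b} w_\gamma(V)~, \qquad w_\gamma(V) = (2d)^{-|\gamma|}\prod_{i=0}^{|\gamma|}\frac{1}{1+V(v_i)}~,
\]
the sum running over nearest-neighbour walks $\gamma = (v_0, \dots, v_{|\gamma|})$ joining $a$ to $b$. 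Since $V \mapsto \log(1+V)$ is concave, $\log w_\gamma(V) = -|\gamma|\log(2d) - \sum_i \log(1+V(v_i))$ is a convex function of $V$, so each $w_\gamma$ is log-convex; hence $g(x) = G(0, x)$ is log-convex, and $\log g(x)$ is convex.

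For the two ratios I would first remove the dependence on $V(0)$ (respectively $V(x)$), since otherwise the concave term $\log(1 + V(0))$ appearing in $w_\gamma$ would spoil convexity. Breaking a walk $\gamma$ from $0$ to $x$ at its last visit to the origin yields a bijection with pairs $(\gamma^{(1)}, \gamma^{(2)})$, where $\gamma^{(1)}:0 \to 0$ and $\gamma^{(2)}: 0 \to x$ avoids $0$ after its initial step; the weight factorises as $w_\gamma = w_{\gamma^{(1)}}\cdot \tilde w_{\gamma^{(2)}}$, the extra factor $(1 + V(0))^{-1}$ coming from the first vertex of $\gamma^{(2)}$ being absorbed into $w_{\gamma^{(1)}}$. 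Summing separately therefore gives
\[
\frac{G(0, x)}{G(0, 0)} = \sum_{\gamma: 0 \to x,\ v_i \neq 0 \text{ for } i \geq 1} (2d)^{-|\gamma|}\prod_{i=1}^{|\gamma|}\frac{1}{1+V(v_i)}~,
\]
in which $V(0)$ has disappeared; each term is log-convex in $V$ by the same argument as before, and so therefore is the ratio. An analogous first-entrance decomposition at $x$ yields
\[
\frac{G(0, x)}{G(x, x)} = \sum_{\gamma: 0 \to x,\ v_i \neq x \text{ for } i < |\gamma|} (2d)^{-|\gamma|}\prod_{i=0}^{|\gamma|-1}\frac{1}{1+V(v_i)}~,
\]
again a sum of log-convex terms and hence itself log-convex. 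The concavity of $\rho$ then follows at once as the negative of the average of the two convex logarithms.

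The real content of the argument is in the two last-/first-exit identities, which are probabilistic in nature: they express $G(0,x)/G(x,x)$ and $G(0,x)/G(0,0)$ as hitting probabilities for a killed random walk, which depend only on the potential away from the respective endpoint. I do not anticipate a serious obstacle; the only points that warrant attention are the convergence of the walk series and the rearrangement implicit in the bijective decomposition, both of which are immediate under the standing assumption $V \geq \epsilon$.
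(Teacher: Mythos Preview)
Your argument is correct and coincides with the paper's: both prove convexity of $\log g(x)$ via the walk expansion and handle the ratios by a last-exit/first-entrance decomposition (the paper phrases this algebraically as $G(0,x)=\tfrac{1}{2d}\,G(0,0)\sum_{y\sim 0}G_0(y,x)$ with $G_0$ the Green's function after deleting the edges at $0$, which is exactly your sum over walks avoiding $0$ after time $0$). One small remark: your stated motivation for the decomposition is off---each $\log w_\gamma$ already contains only the convex term $-\log(1+V(0))$, so nothing is ``spoiled'' there; the genuine issue is that $\log G(0,x)-\log G(0,0)$ is a difference of convex functions, and the decomposition is what recasts the ratio as a single sum of log-convex terms.
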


\begin{proof}
The first statement follows from the random walk expansion:
\begin{equation}\label{eq:rw} g(x) = \sum \frac{1}{1 + V(x_0)} \frac{1}{2d} \frac{1}{1+V(x_1)} \frac{1}{2d} \cdots \frac{1}{2d} \frac{1}{1+V(x_k)}~,
\end{equation}
where the sum is over all paths $w: x_0 = 0, x_1, \cdots, x_{k-1}, x_k = x$. Indeed, for every $w$
\[ T_w = \log \frac{1}{1 + V(x_0)} \frac{1}{2d} \frac{1}{1+V(x_1)} \frac{1}{2d} \cdots \frac{1}{2d} \frac{1}{1+V(x_k)} \]
is a convex function of $V$, hence also $\log g(x) = \log \sum_w e^{T_w}$ is convex.

To prove the second statement, observe that
\[ G(0, x) = \frac{1}{2d} G(0, 0) \sum_{y \sim 0} G_0(y, x)~,\]
where $G_0$ is obtained by deleting the edges adjacent to $0$. Therefore
\[ \log \frac{G(0,x)}{G(0,0)} = - \log(2d) + \log \sum_{y \sim 0}  G_{0}(y, x)~; \]
for every $y$, $\log G_{0}(y, x)$ is a convex function of $V$, hence so is $\log \frac{G(0,x)}{G(0,0)}$.
\end{proof}

\begin{proof}[Proof of Theorem~\ref{thm:fluct}]
Denote $\rho_0(x) = \min(\rho(x), \mu(x))$. Then by Lemma~\ref{l:deriv} and Proposition~\ref{p:bound.2}
\[ \| \nabla_V \rho_0(x) \|_2^2 \leq A_\epsilon(\mu(x) + 1)~,\]
$A_\epsilon$ depends only on $\epsilon$. By Lemma~\ref{l:conv}, $\rho_0$ is concave, therefore by Lemma~\ref{l:radek}
\[ \mathbb{P} \left\{ \rho(x) \leq \mu(x) - t \right\}
    \leq \exp \left\{ - \frac{t^2}{C M^2 A_\epsilon (\mu(x)+1)} \right\}~. \]
Similarly, set $\rho_t(x) = \min(\rho(x), \mu(x) + t)$. Then
\[  \| \nabla_V \rho_t(x) \|_2^2 \leq A_\epsilon(\mu(x) + t + 1)~,\]
therefore by Lemma~\ref{l:radek}
\[\begin{split}
 \mathbb{P} \left\{ \rho(x) \geq \mu(x) + t \right\}
    &= \mathbb{P} \left\{ \rho_t(x) \geq \mu(x) + t \right\} \\
    &\underset{\triangle}{\leq} \exp \left\{ - \frac{t^2}{C M^2 A_\epsilon (\mu(x)+ t + 1)} \right\}~. 
\end{split}\]
\end{proof}

\section{Proof of Theorem~\ref{thm:bks}}

The proof follows the strategy of Benjamini, Kalai, and Schramm \cite{BKS}. Without loss of generality
we may assume that $\|x \| \geq 2$; set $m = \lfloor\| x \|^{1/4} \rfloor + 1$. 

To implement the Benjamini--Kalai--Schramm averageing argument, set
\[ F = - \frac{1}{\# B} \sum_{z \in B} \log G(z, x+z)~, \]
where 
\[ B = B(0, m) = \{ z \in \mathbb{Z}^d \, \mid \, \|z \|\leq m \} \]
is the ball of radius $m$ about the origin (cf.\ Alexander and Zygouras \cite{AZ}). According to Lemma~\ref{l:bound.infty},
\[ G(0, x) \geq \frac{G(z,x+z) G(0, z) G(x, x+z)}{G(z,z) G(x+z,x+z)}~, \]
therefore $\rho(x) \leq F + C_{a,b} m$; similarly, $\rho(x) \geq F - C_{a,b}m$. It is therefore sufficient to
show that 
\[ \mathrm{Var} \, F \leq C_{a,b} \, \frac{\|x\|}{\log \|x\|}~. \]

We use another inequality due to Talagrand \cite{T1} (see Ledoux \cite{L} for a semigroup derivation). Let $\mathcal{X}$ be
a (finite or countable) set. Let $\sigma_x^+: \{a,b\}^\mathcal{X} \to \{a,b\}^\mathcal{X}$ be the map setting the $x$-th
coordinate to $b$, and $\sigma_x^-: \{a,b\}^\mathcal{X} \to \{a,b\}^\mathcal{X}$ --the map setting the $x$-th
coordinate to $a$. Denote 
\[ \partial_x f = f \circ \sigma_x^+ - f \circ \sigma_x^-~. \]
\begin{lemma}[Talagrand]
For any function $f$ on $\{a, b\}^\mathcal{X}$,
\begin{equation}\label{eq:tal} \mathrm{Var} \, f \leq C_{a,b} 
  \sum_{x \in \mathcal{X}} \frac{\mathbb{E} |\partial_x f|^2}{1 + \log \frac{\mathbb{E} |\partial_x f|^2}{(\mathbb{E} |\partial_x f|)^2}}~. 
\end{equation}
\end{lemma}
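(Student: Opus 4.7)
The plan is to combine Bonami--Beckner hypercontractivity with the integration identity $1/|S| = \int_0^1 \tau^{|S|-1}\, d\tau$ (for $|S| \geq 1$), essentially rephrasing Ledoux's semigroup derivation \cite{L} in combinatorial language. By the affine change of coordinates $\epsilon(x) = (2V(x) - a - b)/(b - a)$ one reduces to the standard symmetric Bernoulli case $\{a,b\} = \{-1,+1\}$: the discrete derivative $\partial_x$ is unchanged under this bijection, so the inequality for $\pm 1$-Bernoulli with an absolute constant is equivalent to the general case. By monotone approximation I may also assume $\mathcal{X}$ finite. Expand $f = \sum_S \hat f(S) \chi_S$ in the Walsh basis $\chi_S = \prod_{x \in S} \epsilon(x)$; a direct calculation yields $\partial_x f = 2 \sum_{S \ni x} \hat f(S) \chi_{S \setminus \{x\}}$, so writing $I_x := \mathbb{E}|\partial_x f|^2 = 4 \sum_{S \ni x} \hat f(S)^2$ and $J_x := (\mathbb{E}|\partial_x f|)^2$, Parseval and the integration identity give
\[ \mathrm{Var}(f) = \sum_{|S|\geq 1} \hat f(S)^2 = \sum_x \sum_{S \ni x} \frac{\hat f(S)^2}{|S|} = \int_0^1 \sum_x \sum_{S \ni x} \tau^{|S|-1} \hat f(S)^2 \, d\tau. \]

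The Bonami--Beckner inequality $\|T_\rho g\|_2 \leq \|g\|_{1+\rho^2}$ applied to $g = \partial_x f$ (whose Fourier coefficients are $\widehat{\partial_x f}(T) = 2\hat f(T \cup \{x\}) \mathbf{1}_{x \notin T}$) reads
\[ \sum_{S \ni x} \tau^{|S|-1} \hat f(S)^2 \leq \tfrac{1}{4} \|\partial_x f\|_{1+\tau}^2, \qquad \tau \in [0,1]. \]
Log-convexity of $L^p$ norms then yields $\|\partial_x f\|_{1+\tau}^2 \leq J_x^{(1-\tau)/(1+\tau)} I_x^{2\tau/(1+\tau)}$. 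Substituting into the preceding display and changing variable to $u = 2\tau/(1+\tau)$ bounds the $\tau$-integral by $C \int_0^1 J_x^{1-u} I_x^u \, du = C(I_x - J_x)/\log(I_x/J_x)$, with the trivial bound $\leq C I_x$ when $\log(I_x/J_x) \leq 1$. Combining the two regimes gives $\sum_{S \ni x} \hat f(S)^2/|S| \leq C I_x / (1 + \log(I_x/J_x))$, and summing over $x$ produces (\ref{eq:tal}).

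The argument is essentially mechanical once the identity $1/|S| = \int_0^1 \tau^{|S|-1}\, d\tau$ is introduced: it repackages the level-by-level hypercontractive estimates into a single $\tau$-integral, sparing one the explicit decomposition of $\sum_{S \ni x} \hat f(S)^2$ into low- and high-degree Walsh components. The only delicate step is verifying that the interpolation integral simplifies to the desired $I/(1 + \log(I/J))$ uniformly in $I \geq J > 0$, which requires a case split according to whether $\log(I/J)$ is $\lesssim 1$ or $\gtrsim 1$. The sharp form of Bonami--Beckner --- with exponent $1 + \rho^2$, equivalent to Gross's log-Sobolev inequality on the cube --- is precisely what delivers the logarithmic denominator with the correct coefficient.
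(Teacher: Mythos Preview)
The paper does not supply a proof of this lemma: it is stated as a black box, with a citation to Talagrand \cite{T1} and a parenthetical pointer to Ledoux \cite{L} for a semigroup derivation. Your argument is correct and is precisely the hypercontractivity/semigroup proof the paper alludes to, written out in the Walsh--Fourier language; the reduction to $\{-1,+1\}$, the identity $\mathrm{Var}\,f=\sum_x\sum_{S\ni x}\hat f(S)^2/|S|$, the Bonami--Beckner bound on $\|T_{\sqrt\tau}\,\partial_x f\|_2$, and the $L^1$--$L^2$ interpolation all check out, and the final case split on $\log(I_x/J_x)$ is the standard way to pass from $(I_x-J_x)/\log(I_x/J_x)$ to $I_x/(1+\log(I_x/J_x))$.
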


Let us estimate the right-hand side for $f = F$, $\mathcal{X} = \mathbb{Z}^d$. Denote 
\[ \sigma_x^t = t \sigma_x^+ + (1-t) \sigma_x^-~; \]
then
\[ \partial_x F = \int_0^1 \frac{\partial F}{\partial V(x)} \circ \sigma_x^t \, dt~.\]
According to Lemma~\ref{l:deriv},
\[ \frac{\partial F}{\partial V(y)} = \frac{1}{\# B} \sum_{z \in B} \frac{G(z, y) G(y, x+z)}{G(z, x+z)}~. \]
Therefore
\[\begin{split} \mathbb{E} \frac{\partial F}{\partial V(y)} \circ \sigma_y^t 
  &= \mathbb{E} \frac{1}{\# B} \sum_{z \in B} \frac{G(z, y) G(y, x+z)}{G(z, x+z)} \circ \sigma_y^t \\
  &= \mathbb{E} \frac{1}{\# B} \sum_{z \in B} \frac{G(0, y-z)G(y-z,x)}{G(0, x)} \circ \sigma_{y-z}^t  \\
  &= \mathbb{E} \frac{1}{\# B} \sum_{v \in y + B} \frac{G(0, v)G(v,x)}{G(0, x)} \circ \sigma_v^t~.
\end{split}\]

\begin{lemma}\label{l:bks} For any $Q \subset \mathbb{Z}^d$ and any $x',x \in \mathbb{Z}^d$,
\begin{equation}\label{eq:lbks}
\sum_{v \in Q} \frac{G(x',v)G(v,x)}{G(x',x)} \leq C_a (\mathrm{diam}_\rho \, Q + 1) \leq C_{a,b} (\mathrm{diam} \, Q + 1)~. 
\end{equation}
\end{lemma}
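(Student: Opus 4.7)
The plan is to bound $u_Q(y) := \sum_{v \in Q} G(x', v) G(v, y)/G(x', y)$, whose value at $y = x$ is the left-hand side of (\ref{eq:lbks}), by applying a maximum principle argument twice, reducing to Lemma~\ref{l:7.5} invoked between two points of $Q$.

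First I note that, as a function of $y$, $u_Q$ satisfies a sub-Markov recurrence of the same shape as (\ref{eq:relu}), namely
\[
    u_Q(y) = \sum_{z \sim y} u_Q(z) \frac{G(x', z)}{2d(1+V(y))G(x', y)} + \frac{\mathbf{1}_Q(y)}{1+V(y)}~,
\]
losing mass only at $y = x'$; this follows from $H(GMG)(\cdot, y) = G(\cdot, y) \mathbf{1}_Q(\cdot)$ (with $M$ the diagonal matrix of $\mathbf{1}_Q$) on dividing by $G(x', y)$. A finite-volume approximation and the maximum principle then give $\sup_{\mathbb{Z}^d} u_Q \leq \max\bigl( u_Q(x'),\, \max_{v \in Q} u_Q(v) \bigr)$; at $x'$ one has the crude bound $u_Q(x') = \sum_{v \in Q} G(x', v)^2/G(x', x') \leq (2+b)/a^2$, using $\sum_v G(x', v)^2 \leq 1/a^2$ and $G(x', x') \geq 1/(2+b)$.

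For $v \in Q$, the symmetry of $G$ gives $u_Q(v) = u_Q^{(v)}(x')$, where $u_Q^{(v)}(y) := \sum_{w \in Q} G(v, w) G(w, y)/G(v, y)$ is the analogous sum with origin $v \in Q$ in place of $x'$. Reapplying the maximum principle to $u_Q^{(v)}$ -- whose leakage site $v$ now already lies inside $Q$ -- yields $\sup u_Q^{(v)} \leq \max_{w \in Q} u_Q^{(v)}(w)$, and for each $w \in Q$, Lemma~\ref{l:7.5} applied with origin $v$ and endpoint $w$ (after extending the sum from $Q$ to all of $\mathbb{Z}^d$) gives
\[
    u_Q^{(v)}(w) \leq \sum_{w' \in \mathbb{Z}^d} \frac{G(v, w') G(w', w)}{G(v, w)} \leq C_a(\rho(v, w) + 1) \leq C_a(\mathrm{diam}_\rho \, Q + 1)~.
\]
Combining the two applications of the maximum principle proves the first inequality of (\ref{eq:lbks}).

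The second inequality reduces to the deterministic comparison $\rho(x, y) \leq C_{a, b}(\|x - y\| + 1)$, which follows from the random walk lower bound $G(x, y) \geq (2d(1+b))^{-\|x - y\|}$ together with $G(y, y) \leq 1/a$. The main subtlety is in justifying the maximum principle when $u_Q$ and $u_Q^{(v)}$ need not tend to zero at infinity; I expect this to be handled by a standard monotone limit from Dirichlet problems on finite boxes.
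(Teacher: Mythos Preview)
Your proof is correct and follows essentially the same route as the paper's: derive the sub-Markov recurrence for $u_Q$, use a finite-volume maximum principle together with the $x'\leftrightarrow x$ symmetry to reduce to both endpoints lying in $Q$, and then invoke Lemma~\ref{l:7.5}. The only difference is that the paper notes the maximum cannot be attained at the leakage point $x'$ (when $x'\notin Q$ the weights there sum to strictly less than $1$, forcing $u_Q(x')<\max$), so your separate crude bound $u_Q(x')\leq (2+b)/a^2$ is unnecessary---and as written it injects a dependence on $b$ into what is stated as the $C_a$-inequality.
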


Let us first conclude the proof of Theorem~\ref{thm:bks} and then prove the lemma. 
Set $\delta = m^{-\frac{1}{2}}$, and let 
\[ A = \left\{ y \in \mathbb{Z}^d \, \Big| \, \mathbb{E} \left( \partial_y F \right)^2 
  \leq \delta \, \mathbb{E} \partial_y F  \right\}~. \]
Then the contribution of coordinates in $A$ to the right-hand side of (\ref{eq:tal}) is at most $C \delta \|x\|$
by Lemma~\ref{l:7.5}. For $y$ in the complement of $A$, Lemma~\ref{l:bks} yields
\[ \mathbb{E} \partial_y F \leq \frac{Cm}{\# B}~, \]
hence 
\[ \mathbb{E} \left(\partial_y F \right)^2 \geq \delta \, \mathbb{E} \partial_y F
  \geq \frac{\delta \# B}{C m} \left( \mathbb{E} \partial_y F \right)^2~,\]
and
\[ \log \frac{\mathbb{E} \left(\partial_y F\right)^2}{ \left( \mathbb{E} \partial_y F \right)^2}
  \geq \log \frac{\delta}{Cm} \geq  \log (\|x\|/C') \]
by the inequality $\# B \geq Cm^2$ (which holds with $d$-independent $C$). The contribution of the complement of $A$ to (\ref{eq:tal})
is therefore at most $C' \frac{\|x\|}{\log \|x\|}$. Thus finally
\[ \mathrm{Var} F \leq \frac{C''\|x\|}{\log \|x\|}~. \]
\qed

\begin{proof}[Proof of Lemma~\ref{l:bks}]
For $Q \subset \mathbb{Z}^d$ and $x',x \in \mathbb{Z}^d$, set
\[ u_Q(x',x) = \frac{(G\mathbbm{1}_QG)(x',x)}{G(x',x)} = \frac{\sum_{q \in Q} G(x', q) G(q, x)}{G(x',x)}~. \]
Similarly to Lemma~\ref{l:equ},
\[ u_Q(x',x) = \sum_{y\sim x} u_Q(x', y) \frac{G(x',y)}{2d(1+V(x))G(x',x)} + \frac{\mathbbm{1}_Q(x)}{1+V(x)}~.\]
By a finite-volume approximation argument, it is sufficient to prove the estimate (\ref{eq:lbks}) in a finite box. Then
$\max_{x} u_Q(x', x)$ is attained for some $x_{\max} \in Q$. By symmetry, $\max_{x',x} u_Q(x',x)$ is attained when
both $x'$ and $x$ are in $Q$. On the other hand, for $x',x \in Q$
\[ u_Q(x',x) \leq u_{\mathbb{Z}^d}(x',x) \leq C(1 + \log\frac{1}{G(x',x)}) \leq C'(1 + \mathrm{diam_\rho} \, Q) \]
by Lemma~\ref{l:7.5}.
\end{proof}

\begin{rmk}
To extend Theorem~\ref{thm:bks} to the generality of the work of Bena\"im and Rossignol \cite{BR}, one may
use the modified Poincar\'e inequality of \cite{BR} instead of Talagrand's inequality (\ref{eq:tal}); this argument also yields a tail bound as in \cite{BR}. 
\end{rmk}

One may also hope that the even more general methods of Damron, Hanson, and Sosoe \cite{DHS} could 
be adapted to the setting of the current paper.

\section{A remark}\label{s:rem}

Let $H = -\frac{1}{2d} \Delta + V$ be a random Schr\"odinger operator
on $\mathbb{Z}^d$. For $z \in \mathbb{C} \setminus \mathbb{R}$, set
$G_z = (H - z)^{-1}$. The analysis of  Somoza, Ortu\~no, and Prior \cite{SOP}
(see further Le Doussal \cite{LD}) suggests that, in dimension $d= 2$,
\[ \operatorname{Var} \log |G_z(x, y)| \asymp (\| x - y \| + 1)^{2/3}~,\]
and that a similar estimate is valid for the boundary values
$G_{\lambda + i0}$ (which exist for almost every $\lambda \in \mathbb{R}$) even when $\lambda$
is in the spectrum of $H$.

Having this circle of questions in mind, it would be interesting to study
the fluctuations of $\log |G_z(x, y)|$ for  $z \in \mathbb{C} \setminus \mathbb{R}$. 
In dimension $d = 1$, the results of Furstenberg and Kesten \cite{FK}  imply that
\[ \operatorname{Var} \log |G_z(x, y)| \asymp |x-y| + 1~.\]
We are not aware of any rigorous bounds in dimension $d \geq 2$. In
particular, we do not know a proof of the estimate
\begin{equation}\label{eq:qn}
\operatorname{Var} \log |G_z(x, y)| = o(\|x- y\|^2)~,
\quad \|x-y\| \to \infty~,
\end{equation}
even when $z$ is such that the random walk representation
(\ref{eq:rw}) is convergent.

\paragraph{Acknowledgment.} I am grateful to Thomas Spencer for helpful conversations, and to Itai Benjamini,
Michael Damron, Alexander Elgart, and Gil Kalai for their comments on a preliminary version of this note.

\end{document}